\documentclass[amscd,amssymb,10pt]{amsart}


\usepackage{amsrefs,enumitem,graphics,graphicx,mathrsfs,mathtools,soul,xypic}
\usepackage[margin = 1in]{geometry}
\usepackage[usenames,dvipsnames]{color}
\usepackage[pdftex,breaklinks,colorlinks,citecolor = green,urlcolor = blue]{hyperref}


\frenchspacing
\allowdisplaybreaks
\setlist[itemize]{leftmargin = *}
\setlist[enumerate]{leftmargin = *}

\BibSpec{collection.article}{
    +{}  {\PrintAuthors}                {author}
    +{,} { \textit}                     {title}
    +{.} { }                            {part}
    +{:} { \textit}                     {subtitle}
    +{,} { \PrintContributions}         {contribution}
    +{,} { \PrintConference}            {conference}
    +{}  { \PrintBook}                  {book}
    +{,} { }                            {booktitle}
    +{,} { \PrintEditorsA}              {editor}
    +{,} { }                            {publisher}
    +{,} { }                            {address}
    +{,} { \PrintDateB}                 {date}
    +{,} { pp. }                        {pages}
    +{,} { }                            {status}
    +{,} { available at \eprint}        {eprint}
    +{}  { \parenthesize}               {language}
    +{}  { \PrintTranslation}           {translation}
    +{;} { \PrintReprint}               {reprint}
    +{.} { }                            {note}
    +{.} {\SentenceSpace \PrintReviews} {review}
}


\newtheorem*{Prop}{Proposition}

\makeatletter \newenvironment{Flato|Simon}{\FlatoSimonThm \phantomsection \def \@currentlabel{Flato-Simon Integrability Criterion}}{\endFlatoSimonThm} \makeatother

\makeatletter \newenvironment{Nelson}{\NelsonThm \phantomsection \def \@currentlabel{Nelson's theorem}}{\endNelsonThm} \makeatother

\makeatletter \newenvironment{Stone|Von Neumann}{\StoneVonNeumannThm \phantomsection \def \@currentlabel{Stone-von Neumann Theorem}}{\endStoneVonNeumannThm} \makeatother

\theoremstyle{definition}
\newtheorem{Def}{Definition}
\newtheorem*{Eg}{Example}

\newenvironment{proof*}[1][\proofname]{ \begin{proof}[#1]}{\end{proof}}


\newcommand{\C}{\mathbb{C}}
\newcommand{\g}{\mathfrak{g}}
\newcommand{\N}[1]{\mathbb{N}_{#1}}
\newcommand{\R}[1]{\mathbb{R}^{#1}}
\newcommand{\df}{\stackrel{\textnormal{df}}{=}}
\newcommand{\Id}[1]{\textnormal{Id}_{#1}}
\newcommand{\h}{\mathfrak{h}_{3}}
\newcommand{\Bdd}[1]{\mathscr{B} \! \( #1 \)}

\newcommand{\Dom}[1]{\textnormal{Dom} \! \( #1 \)}
\newcommand{\Ska}[1]{{\mathscr{L}_{\textnormal{ska}}} \! \( #1 \)}
\newcommand{\Sks}[1]{{\mathscr{L}_{\textnormal{sks}}} \! \( #1 \)}
\newcommand{\Uni}[1]{\mathscr{U} \! \( #1 \)}
\newcommand{\Norm}[1]{\left\| #1 \right\|}
\newcommand{\Inner}[2]{\left\langle #1,#2 \right\rangle}

\newcommand{\Mapping}[4]{\left\{ \begin{matrix} #1 & \to & #2 \\ #3 & \mapsto & #4 \end{matrix} \right\}}

\renewcommand{\(}{\left(}
\renewcommand{\)}{\right)}
\renewcommand{\[}{\left[}
\renewcommand{\]}{\right]}
\renewcommand{\d}{\mathrm{d}}
\renewcommand{\H}{\mathbf{H}_{3}}
\renewcommand{\L}{{L^{2}}(\mathbb{R})}
\renewcommand{\S}{\mathcal{S}(\mathbb{R})}
\renewcommand{\MR}[1]{~\href{http://www.ams.org/mathscinet-getitem?mr=#1}{MR #1}.}

\def\equationautorefname~#1\null{(#1)\null}


\title{An Infinitesimal Version of the Stone-Von Neumann Theorem}
\author{Leonard Huang}
\address[]{Department of Mathematics \\ University of Colorado Boulder \\ 2300 Colorado Ave \\ Boulder, Colorado 80309}
\email[]{leonard.huang@colorado.edu}


\subjclass[2010]{Primary 17B15, 81S05; Secondary 17B81, 47L60}


\begin{document}


\begin{abstract}
In this paper, we present an infinitesimal version of the Stone-von Neumann Theorem. This work was motivated by the need to formulate the uniqueness property of the Heisenberg Commutation Relation purely in terms of unbounded operators.
\end{abstract}


\maketitle


\section{Introduction}

Traditionally, the uniqueness property of the Heisenberg Commutation Relation (HCR) has often been identified with that of the Weyl Commutation Relation (WCR). There are several difficulties with this, mathematically and physically. To be mathematically precise, the correspondence between the two relations is only formal --- the WCR gives rise to the HCR but not the other way round, as the HCR is not integrable to the WCR in general. From the physical point of view, the HCR has a direct bearing on quantum mechanics, built into the theory itself and yielding important consequences such as the Heisenberg Uncertainty Principle, whereas the WCR appears devoid of any direct physical meaning \cite{Flato,Schmudgen}. Our aim in this paper, then, is to address these deficiencies by formulating the uniqueness property of the HCR solely in terms of unbounded operators. In doing so, we will employ powerful results in the theory of Lie-algebra representations.


\section{Prerequisites from Unbounded Operator Theory}

The following definitions and facts are given for an arbitrary Hilbert space $ \mathcal{H} $.
\begin{itemize}
\item
An \textbf{unbounded} operator on $ \mathcal{H} $ is a $ \C $-linear mapping from a dense linear subspace of $ \mathcal{H} $ to $ \mathcal{H} $.

\item
If $ A $ and $ B $ are unbounded operators on $ \mathcal{H} $, then we say that $ B $ is an \textbf{extension} of $ A $ and write $ A \leq B $ iff $ \Dom{A} \subseteq \Dom{B} $ and $ A v = B v $ for all $ v \in \Dom{A} $.

\item
If $ A $ is an unbounded operator on $ \mathcal{H} $ and $ D $ the set of all elements $ w \in \mathcal{H} $ for which the $ \C $-linear mapping
$$
\left\{ \begin{matrix} \Dom{A} & \to & \mathcal{H} \\ v & \mapsto & \Inner{A v}{w} \end{matrix} \right\}
$$
is bounded, then the \textbf{adjoint} of $ A $, denoted by $ A^{*} $, is defined as the $ \C $-linear mapping with domain $ D $ that takes each element $ w \in D $ to a unique element $ x \in \mathcal{H} $ (depending on $ w $) satisfying
$$
\forall v \in \Dom{A}: \quad
\Inner{A v}{w} = \Inner{v}{x}.
$$
The existence and uniqueness of $ x $ are guaranteed by the Riesz-Fr\'echet Theorem.

\item
Note that $ A^{*} $ may not be an unbounded operator on $ \mathcal{H} $ because $ \Dom{A^{*}} $ may not be dense in $ \mathcal{H} $.

\item
An unbounded operator $ A $ on $ \mathcal{H} $ is called \textbf{symmetric} iff
$$
\forall v,w \in \Dom{A}: \quad
\Inner{A v}{w} = \Inner{v}{A w}.
$$

\item
An unbounded operator $ A $ on $ \mathcal{H} $ is called \textbf{skew-symmetric} iff
$$
\forall v,w \in \Dom{A}: \quad
\Inner{A v}{w} = - \Inner{v}{A w}.
$$

\item
If $ A $ is a symmetric (resp. skew-symmetric) operator on $ \mathcal{H} $, then it is true that $ A \leq A^{*} $ (resp. $ - A \leq A^{*} $).

\item
A symmetric (resp. skew-symmetric) operator $ A $ on $ \mathcal{H} $ is called \textbf{self-adjoint} (resp. \textbf{skew-adjoint}) iff
$$
A = A^{*} \quad (\text{resp. $ - A = A^{*} $}).
$$

\item
A symmetric operator on $ \mathcal{H} $ that has a unique self-adjoint extension is called \textbf{essentially self-adjoint}.

\item
The set of all skew-symmetric (resp. skew-adjoint) operators on $ \mathcal{H} $ is denoted by $ \Sks{\mathcal{H}} $ (resp. $ \Ska{\mathcal{H}} $).

\item
If $ T $ and $ A $ are bounded and unbounded operators on $ \mathcal{H} $ respectively, we say that $ T $ \textbf{commutes} with $ A $ iff $ T A \leq A T $. (This implicitly means that $ T[\Dom{A}] \subseteq \Dom{A} $.)
\end{itemize}


\section{Lie-Algebra Representations}

All Lie algebras mentioned in this paper are real (i.e., their base field is $ \R{} $) and finite-dimensional.


\begin{Def}
If $ \mathcal{S} $ is a set of unbounded operators on a Hilbert space $ \mathcal{H} $, then a linear subspace $ D $ of $ \mathcal{H} $ is called an $ \mathcal{S} $-\textbf{invariant domain} iff $ D \subseteq \Dom{A} $ and $ A[D] \subseteq D $ for each $ A \in \mathcal{S} $.
\end{Def}


\begin{Def} \label{Lie-Algebra Representation}
If $ \g $ is a Lie algebra, then a triple $ (\rho,\mathcal{H},D) $ is called a \textbf{(Lie-algebra) representation} of $ \g $ iff:
\begin{enumerate}
\item[(1)]
$ \mathcal{H} $ is a Hilbert space.

\item[(2)]
$ \rho $ is a mapping from $ \g $ to $ \Sks{\mathcal{H}} $.

\item[(3)]
$ D $ is a dense $ \rho[\g] $-invariant domain.

\item[(4)]
$ \forall g_{1},g_{2} \in \g, ~ \forall \alpha \in \R{}: \quad \rho(\alpha g_{1} + g_{2})|_{D} = \alpha \cdot \rho(g_{1})|_{D} + \rho(g_{2})|_{D} $.

\item[(5)]
$ \forall g_{1},g_{2} \in \g: \quad \rho([g_{1},g_{2}])|_{D} = [\rho(g_{1})|_{D},\rho(g_{2})|_{D}] $.
\end{enumerate}
\end{Def}


Lie-algebra representations are closely related to many areas of physics, especially quantum mechanics, as the next example shows.


\begin{Eg}[The Heisenberg Lie Algebra]
Let $ \h $ denote the three-dimensional Heisenberg Lie Algebra. This is the real Lie algebra spanned by three elements $ x $, $ y $ and $ z $ that satisfy the following Lie-bracket relations:
$$
[x,y] = z, \quad
[x,z] = 0  \quad \text{and} \quad
[y,z] = 0.
$$
Let $ Q $ and $ P_{\hbar} $ denote, respectively, the (self-adjoint) quantum-mechanical position and momentum operators on $ \L $. The subscript ``$ \hbar $'' in ``$ P_{\hbar} $'' plays the role of a parameter taking on small positive real values, and it indicates that we are employing the physicist's definition of the momentum operator as the differential operator $ \dfrac{\hbar}{i} \cdot \dfrac{\d}{\d x} $, whose domain is the Sobolev space $ {W^{1,2}}(\R{}) $. We call $ (Q,P_{\hbar}) $ the \textbf{Schr\"odinger pair}.

Let $ \S $ denote the space of Schwartz functions on $ \R{} $. Then $ \( \sigma,\L,\S \) $ is a representation of $ \h $, where $ \sigma: \h \to \Sks{\L} $ is defined by
$$
\forall (\alpha,\beta,\gamma) \in \R{3}: \quad
\sigma(\alpha x + \beta y + \gamma z) \df - i \! \( \alpha Q|_{\S} + \beta P_{\hbar}|_{\S} + \gamma \hbar \cdot \Id{\S} \).
$$

Verifying Conditions (1)-(4) in \autoref{Lie-Algebra Representation} is a rather routine exercise. It is Condition (5) that deserves attention --- it holds because the Schr\"odinger pair obeys the following commutation relation on $ \S $:
$$
\forall \varphi \in \S: \quad
[Q,P_{\hbar}] \varphi \df (Q P_{\hbar} - P_{\hbar} Q) \varphi = i \hbar \cdot \varphi.
$$
This is the famous \textbf{Heisenberg Commutation Relation (HCR)} in quantum mechanics. It can be placed in the context of a general Hilbert space, as shown in the next definition.
\end{Eg}


\begin{Def}
A pair $ (A,B) $ of self-adjoint operators on a Hilbert space $ \mathcal{H} $ is said to \textbf{obey the HCR} on a dense $ \{ A,B \} $-invariant domain $ D $ iff
$$
\forall v \in D: \quad
[A,B] v \df (A B - B A) v = i \hbar \cdot v.
$$
\end{Def}


\section{Analytic Vectors and Integrability}


\begin{Def}
If $ A $ is an unbounded operator on a Hilbert space $ \mathcal{H} $, then a vector $ v \in \mathcal{H} $ is called $ A $-\textbf{analytic} iff:
\begin{itemize}
\item
$ v \in \Dom{A^{k}} $ for all $ k \in \N{} $.

\item
There exists an $ s > 0 $ such that $ \displaystyle \sum_{k = 0}^{\infty} \frac{s^{k}}{k!} \Norm{A^{k} v} < \infty $, where $ {A^{0}}(v) \df v $.
\end{itemize}
If $ \mathcal{S} $ is a set of unbounded operators on $ \mathcal{H} $, then a vector $ v \in \mathcal{H} $ is called $ \mathcal{S} $-\textbf{analytic} iff $ v $ is $ A $-analytic for each $ A \in \mathcal{S} $.
\end{Def}


Analytic vectors constitute an important concept because they provide an exact criterion for deciding whether or not a symmetric operator is essentially self-adjoint. This criterion, which is given next, was first announced by Edward Nelson in \cite{Nelson}.


\begin{Nelson} \label{Nelson}
Let $ A $ be a symmetric operator on a Hilbert space $ \mathcal{H} $. Then $ A $ is essentially self-adjoint iff $ \Dom{A} $ contains a dense set of $ A $-analytic vectors.
\end{Nelson}


Let $ \g $ be a Lie algebra. By the Lie-Cartan Theorem (also known as Lie's Third Theorem), there exists a simply connected Lie group, unique up to isomorphism, whose Lie algebra of left-invariant vector fields is isomorphic to $ \g $. We denote this Lie group by $ G_{\g} $.

The three-dimensional simply connected Lie group associated with $ \h $ is called the \textbf{(three-dimensional) Heisenberg Lie Group}. It is denoted by $ \H $ instead of $ G_{\h} $.


\begin{Def}
A \textbf{one-parameter unitary group} on a Hilbert space $ \mathcal{H} $ is a unitary representation $ U $ of $ \R{} $ on $ \mathcal{H} $, i.e., $ U $ is a homomorphism from $ \R{} $ to $ \Uni{\mathcal{H}} $. We call $ U $ \textbf{strongly continuous} iff it is strongly continuous as a unitary group representation.
\end{Def}


\begin{Def}
If $ U $ is a strongly continuous one-parameter unitary group on a Hilbert space $ \mathcal{H} $, then
$$
D_{U} \df \left\{ v \in \mathcal{H} ~ \middle| ~ \text{The limit $ \lim_{t \to 0} \[ \frac{U(t) - \Id{\mathcal{H}}}{t} \] v $ exists} \right\}
$$
is a dense linear subspace of $ \mathcal{H} $. We can thus define an unbounded operator $ A $ on $ \mathcal{H} $ with domain $ D_{U} $ by
$$
\forall v \in D_{U}: \quad
A v \df \lim_{t \to 0} \[ i \cdot \frac{U(t) - \Id{\mathcal{H}}}{t} \] v.
$$
Stone's Theorem (\cite{Riesz|Sz.-Nagy}, Section 137) guarantees that $ A $ is skew-adjoint. We then call $ A $ the \textbf{infinitesimal skew-adjoint generator} of $ U $.
\end{Def}


\begin{Def}
A representation $ (\rho,\mathcal{H},D) $ of a Lie algebra $ \g $ is called \textbf{integrable} iff there exists a strongly continuous unitary representation $ \pi $ of $ G_{\g} $ on $ \mathcal{H} $ such that $ {\partial \pi}(g)|_{D} = \rho(g)|_{D} $ for each $ g \in \g $, in which case we call $ \pi $ an \textbf{integral} of $ (\rho,\mathcal{H},D) $. Here, $ \partial \pi $ denotes the differential of $ \pi $ --- the mapping from $ \g $ to $ \Ska{\mathcal{H}} $ that takes each element $ g \in \g $ to the infinitesimal skew-adjoint generator of the strongly continuous one-parameter unitary group
$$
\Mapping{\R{}}{\Uni{\mathcal{H}}}{t}{\pi(\exp(t g))}.
$$
\end{Def}


There exist Lie-algebra representations that are not integrable; a famous example is given in \cite{Nelson}.

The following theorem --- a slight variant of a result due to Mosh\'e Flato and Jacques Simon --- gives a sufficient condition for the integrability of a Lie-algebra representation in terms of analytic vectors, which is yet another piece of evidence for their importance.


\begin{Flato|Simon} \label{Flato|Simon} \cite{Flato|Simon,Gotay}
Let $ (\rho,\mathcal{H},D) $ be a representation of a Lie algebra $ \g $. Let $ \mathcal{G} $ be a set of Lie generators for $ \g $. If $ D $ contains a dense set of $ \rho[\mathcal{G}] $-analytic vectors, then $ (\rho,\mathcal{H},D) $ is integrable and has a unique integral.
\end{Flato|Simon}


\section{An Infinitesimal Version of the Stone-Von Neumann Theorem}

In this section, we present an infinitesimal version of the Stone-von Neumann Theorem. Before we do so, let us give a few necessary definitions and also a statement of the Stone-von Neumann Theorem.


\begin{Def}
A pair $ (U,V) $ of strongly continuous one-parameter unitary groups on a Hilbert space $ \mathcal{H} $ is called \textbf{jointly irreducible} iff the union $ \{ U(s) \mid s \in \R{} \} \cup \{ V(t) \mid t \in \R{} \} $ is an irreducible set of unitary operators on $ \mathcal{H} $.
\end{Def}


\begin{Def}
Let $ (A_{1},B_{1}) $ and $ (A_{2},B_{2}) $ be pairs of unbounded operators on Hilbert spaces $ \mathcal{H}_{1} $ and $ \mathcal{H}_{2} $ respectively. We call $ (A_{1},B_{1}) $ and $ (A_{2},B_{2}) $ \textbf{unitarily equivalent} iff there is a unitary operator $ W: \mathcal{H}_{1} \to \mathcal{H}_{2} $ such that $ W A_{1} W^{-1} = A_{2} $ and $ W B_{1} W^{-1} = B_{2} $. This means that $ W $ is a bijection from $ \Dom{A_{1}} $ to $ \Dom{A_{2}} $ and from $ \Dom{B_{1}} $ to $ \Dom{B_{2}} $.
\end{Def}


\begin{Stone|Von Neumann} \label{Stone|Von Neumann}
Let $ (U,V) $ be a pair of strongly continuous one-parameter unitary groups on a separable Hilbert space $ \mathcal{H} $. Suppose that $ (U,V) $ is jointly irreducible and that
\begin{equation} \label{Weyl Commutation Relation}
\forall s,t \in \R{}: \quad
U(s) V(t) = e^{i \hbar s t} \cdot V(t) U(s).
\end{equation}
If $ A $ and $ B $ denote the infinitesimal skew-adjoint generators of $ V $ and $ U $ respectively (please note the order!), then $ (i A,i B) $ and $ (Q,P_{\hbar}) $ are unitarily equivalent.
\end{Stone|Von Neumann}


\begin{proof}
For a basic proof, see Lemma 2.1 and Theorem 3.1 of \cite{Takhtajan}. There are many other proofs, all of which vary in terms of the level of abstraction. A proof via Mackey's Imprimitivity Theorem can be found in \cite{Taylor}, while one based on the theory of Morita-Rieffel equivalence can be found in \cite{Raeburn|Williams}.
\end{proof}


\autoref{Weyl Commutation Relation} is known as the \textbf{Weyl Commutation Relation (WCR)}. The \ref{Stone|Von Neumann} thus says: Given a pair $ (U,V) $ of strongly continuous one-parameter unitary groups on a Hilbert space $ \mathcal{H} $ that obeys the WCR, if $ (A,B) $ denotes the corresponding pair of infinitesimal skew-adjoint generators, then $ (i A,i B) $ is a pair of self-adjoint operators on $ \mathcal{H} $ that obeys the HCR on a dense $ \{ A,B \} $-invariant domain $ D $.

In general, the correspondence is irreversible: If $ (A,B) $ is a pair of self-adjoint operators on $ \mathcal{H} $ that obeys the HCR on a dense $ \{ A,B \} $-invariant domain $ D $, then \emph{unless $ D $ contains a dense set of $ \{ A,B \} $-analytic vectors}, $ (A,B) $ may not necessarily correspond, in the manner described above, to a pair of strongly continuous one-parameter unitary groups on $ \mathcal{H} $ that obeys the WCR \cite{Schmudgen}. In cases where the correspondence can be reversed, we say that \textbf{the HCR is integrable to the WCR}.

The proof by John von Neumann and Marshall Stone of the uniqueness property of the HCR in \cite{Von Neumann,Stone} requires the assumption that the HCR is integrable to the WCR, which guarantees a two-way correspondence between the two relations. For lack of tools, they were unable to formulate this integrability assumption in infinitesimal terms (i.e., in terms of unbounded operators), and the status-quo remained unchanged until the appearance of results by Franz Rellich and Jacques Dixmier \cite{Rellich,Dixmier}, which allowed an almost-complete formulation of the uniqueness property of the HCR without making any reference at all to the WCR.

We are now ready for our main proposition, which does not seem to have appeared in the literature yet.


\begin{Prop}
Let $ A $ and $ B $ be self-adjoint operators on a separable Hilbert space $ \mathcal{H} $. Then $ (A,B) $ and $ (Q,P_{\hbar}) $ are unitarily equivalent iff the following conditions hold:
\begin{enumerate}[font = \normalfont]
\item[(1)]
There exists a dense $ \{ A,B \} $-invariant domain $ D $ such that
\begin{itemize}
\item
$ (A,B) $ obeys the HCR on $ D $ and

\item
$ D $ contains a dense set of $ \{ A,B \} $-analytic vectors.
\end{itemize}

\item[(2)]
If $ T \in \Bdd{\mathcal{H}} $ commutes with both $ A $ and $ B $, then $ T \in \C \cdot \Id{\mathcal{H}} $.
\end{enumerate}
\end{Prop}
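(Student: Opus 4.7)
The plan is to handle the nontrivial direction ($\Leftarrow$) by packaging $(A,B)$ as a Lie-algebra representation of $\h$, integrating it via the \ref{Flato|Simon} to a unitary representation of $\H$, extracting two one-parameter unitary groups that obey the Weyl Commutation Relation, and then invoking the classical \ref{Stone|Von Neumann}. The converse ($\Rightarrow$) is a direct transport argument through the given unitary equivalence.

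For ($\Leftarrow$), define $\rho \colon \h \to \Sks{\mathcal{H}}$ by letting $\rho(\alpha x + \beta y + \gamma z)$ be the skew-symmetric operator with domain $D$ sending $v \mapsto -i(\alpha A v + \beta B v + \gamma \hbar v)$. All clauses of \autoref{Lie-Algebra Representation} are immediate from the self-adjointness of $A, B$ and the $\{A,B\}$-invariance of $D$ except $[\rho(x)|_{D},\rho(y)|_{D}] = \rho(z)|_{D}$, which is precisely the hypothesis that $(A,B)$ obeys the HCR on $D$ (the remaining brackets vanish because $\rho(z) = -i\hbar \cdot \Id{\mathcal{H}}$ is central). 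Since $\Norm{(-iA)^{k} v} = \Norm{A^{k} v}$ and similarly for $B$, the dense set of $\{A,B\}$-analytic vectors in $D$ furnished by (1) is also $\rho[\{x,y\}]$-analytic; and $\{x,y\}$ Lie-generates $\h$ because $z = [x,y]$. The \ref{Flato|Simon} then yields a unique strongly continuous unitary representation $\pi \colon \H \to \Uni{\mathcal{H}}$ with $\partial\pi(g)|_{D} = \rho(g)|_{D}$ for every $g \in \h$.

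Set $V(t) \df \pi(\exp(t x))$ and $U(s) \df \pi(\exp(s y))$. Applying \ref{Nelson} to $A|_{D}$ shows that $A|_{D}$ is essentially self-adjoint with closure $A$; equivalently, $-iA|_{D}$ is essentially skew-adjoint with closure $-iA$, and by uniqueness of skew-adjoint extensions this forces $\partial\pi(x) = -iA$. The same reasoning gives $\partial\pi(y) = -iB$ and $\partial\pi(z) = -i\hbar \cdot \Id{\mathcal{H}}$, whence $\pi(\exp(r z)) = e^{-i\hbar r} \cdot \Id{\mathcal{H}}$. Because $\h$ is two-step nilpotent, the Baker--Campbell--Hausdorff identity in $\H$ reduces to $\exp(s y)\exp(t x) = \exp(t x)\exp(s y)\exp(-s t z)$, and applying $\pi$ produces the WCR $U(s) V(t) = e^{i\hbar s t} \cdot V(t) U(s)$. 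For joint irreducibility, any $T \in \Bdd{\mathcal{H}}$ commuting with every $U(s)$ and $V(t)$ commutes with the resolvents of their infinitesimal generators and hence, via the bounded Borel functional calculus, with $A$ and $B$; hypothesis (2) then forces $T \in \C \cdot \Id{\mathcal{H}}$. Since $\mathcal{H}$ is separable, the \ref{Stone|Von Neumann} applies to $(U,V)$ and yields $(A,B) = (i\cdot(-iA),\,i\cdot(-iB)) \sim (Q,P_{\hbar})$.

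For ($\Rightarrow$), let $W$ witness the equivalence and take $D \df W^{-1}[\S]$; this is a dense $\{A,B\}$-invariant domain on which the HCR transports from the Schr\"odinger pair, and the $W^{-1}$-images of the Hermite functions (standard entire vectors for $Q$ and $P_{\hbar}$) furnish a dense set of $\{A,B\}$-analytic vectors. If $T$ commutes with $A$ and $B$, then $W T W^{-1}$ commutes with $Q$ and $P_{\hbar}$ and hence with the strongly continuous unitary groups they generate, which are jointly irreducible in the Schr\"odinger picture, so $W T W^{-1}$, and therefore $T$, is a scalar. I expect the main technical hurdle to be the identification $\partial\pi(x) = -iA$ (and its analogue for $y$): one must check that the restriction of $-iA$ to the analytic-vector-rich domain $D$ is essentially skew-adjoint with $-iA$ as its unique closure, which is exactly where the analytic-vector hypothesis of (1) is genuinely needed through \ref{Nelson}, and then one must match this unique skew-adjoint extension to the Stone generator supplied by $\pi$.
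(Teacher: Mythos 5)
Your proposal is correct and follows essentially the same route as the paper's proof: build the $\h$-representation from the HCR, integrate it with the Flato-Simon criterion, identify the generators via Nelson's theorem, derive the WCR from BCH, get joint irreducibility from Condition (2), and invoke the Stone-von Neumann Theorem, with the converse handled by transporting $\S$ and the Hermite functions through the unitary equivalence. The only cosmetic differences (arguing joint irreducibility directly from the trivial commutant rather than via irreducibility of $\pi$ and Schur's Lemma, and deriving Condition (2) in the converse from irreducibility of the Schr\"odinger picture rather than citing it) do not change the argument.
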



\begin{proof}
Suppose that $ A $ and $ B $ satisfy Conditions (1) and (2).

Define a representation $ (\rho,\mathcal{H},D) $ of $ \h $ by
$$
\forall (\alpha,\beta,\gamma) \in \R{3}: \quad
\rho(\alpha x + \beta y + \gamma z) \df - i (\alpha A|_{D} + \beta B|_{D} + \gamma \hbar \cdot \Id{D}).
$$
This is indeed a representation because $ (A,B) $ obeys the HCR on $ D $.

As $ \{ x,y \} $ is a set of Lie generators for $ \h $, and as $ D $ contains a dense set of $ \{ A,B \} $-analytic vectors, the \ref{Flato|Simon} says that $ (\rho,\mathcal{H},D) $ is integrable and has a unique integral $ \pi $. Then by the definition of integrability, we have
$$
{\partial \pi}(x)|_{D} = - i A|_{D}, \quad
{\partial \pi}(y)|_{D} = - i B|_{D}  \quad \text{and} \quad
{\partial \pi}(z)|_{D} = - i \hbar \cdot \Id{D}.
$$
According to \ref{Nelson}, $ A|_{D} $, $ B|_{D} $ and $ \Id{D} $ are essentially self-adjoint,\footnote{Nelson's theorem is not really needed to show that $ \Id{D} $ is essentially self-adjoint. Simply observe that the closure of the graph of $ \Id{D} $ in $ \mathcal{H} \times \mathcal{H} $ is the graph of $ \Id{\mathcal{H}} $, which is obviously self-adjoint. A theorem by von Neumann then says that if the closure of the graph of a symmetric operator $ A $ is the graph of a self-adjoint operator $ B $, then $ A $ is automatically essentially self-adjoint and its unique self-adjoint extension is $ B $.} so
\begin{equation} \label{Generators}
{\partial \pi}(x) = - i A, \quad {\partial \pi}(y) = - i B \quad \text{and} \quad {\partial \pi}(z) = - i \hbar \cdot \Id{\mathcal{H}}.
\end{equation}
Before proceeding further, let us establish a claim. \\

\noindent \textbf{Claim:} $ \pi $ is an irreducible unitary representation of $ \H $.

\begin{proof*}[Proof of Claim]
Define strongly continuous one-parameter unitary groups $ U $ and $ V $ on $ \mathcal{H} $ by
$$
U \df \Mapping{\R{}}{\Uni{\mathcal{H}}}{t}{\pi(\exp(t y))} \quad \text{and} \quad
V \df \Mapping{\R{}}{\Uni{\mathcal{H}}}{t}{\pi(\exp(t x))}.
$$
From \autoref{Generators}, we see that $ U $ and $ V $ have infinitesimal skew-adjoint generators $ - i B $ and $ - i A $ respectively.

If $ T \in \Bdd{\mathcal{H}} $ commutes with $ \pi(h) $ for all $ h \in \H $, then $ T $ commutes with $ U(t) $ and $ V(t) $ for all $ t \in \R{} $. By Stone's Theorem (\cite{Riesz|Sz.-Nagy}, Section 137), $ T $ commutes with $ A $ and $ B $, so $ T \in \C \cdot \Id{\mathcal{H}} $, thanks to Condition (2). Therefore, $ \pi $ is an irreducible unitary representation of $ \H $ by Schur's Lemma (\cite{Taylor}, Proposition B.2).
\end{proof*}

By Stone's Theorem again, we have $ \pi(\exp(t z)) = e^{- i \hbar t} \cdot \Id{\mathcal{H}} $ for all $ t \in \R{} $. The BCH Formula then yields
\begin{align*}
\forall (\alpha,\beta) \in \R{2}: \qquad \quad
  ~ & \pi(\exp(\alpha x)) ~ \pi(\exp(\beta y)) \\
= ~ & \pi(\exp(\alpha x) \exp(\beta y)) \\
= ~ & \pi \! \( \exp \! \( \alpha x + \beta y + \frac{1}{2} [\alpha x,\beta y] \) \) \\
= ~ & \pi \! \( \exp \! \( \alpha x + \beta y + \frac{1}{2} \alpha \beta z \) \) \quad \quad (\text{As $ [x,y] = z $.}) \\
= ~ & \pi \! \( \exp(\alpha \beta z) \exp \! \( \alpha x + \beta y - \frac{1}{2} \alpha \beta z \) \) \qquad (\text{As $ z $ is a central element of $ \h $.}) \\
= ~ & \pi \! \( \exp(\alpha \beta z) \exp \! \( \beta y + \alpha x + \frac{1}{2} [\beta y,\alpha x] \) \) \qquad (\text{As $ [y,x] = - z $.}) \\
= ~ & \pi(\exp(\alpha \beta z) \exp(\beta y) \exp(\alpha x)) \\
= ~ & \pi(\exp(\alpha \beta z)) ~ \pi(\exp(\beta y)) ~ \pi(\exp(\alpha x)) \\
= ~ & e^{- i \hbar \alpha \beta} \cdot \pi(\exp(\beta y)) ~ \pi(\exp(\alpha x)).
\end{align*}
It follows from this and the irreducibility of $ \pi $ that $ (U,V) $ is jointly irreducible and obeys the WCR:
$$
\forall (\alpha,\beta) \in \R{2}: \quad
\pi(\exp(\beta y)) ~ \pi(\exp(\alpha x)) = e^{i \hbar \alpha \beta} \cdot \pi(\exp(\alpha x)) ~ \pi(\exp(\beta y)).
$$
Therefore, $ (A,B) $ and $ (Q,P_{\hbar}) $ are unitarily equivalent by the \ref{Stone|Von Neumann}.

For the other direction, suppose that $ (A,B) $ and $ (Q,P_{\hbar}) $ are unitarily equivalent.

By Proposition 2.1 of \cite{Takhtajan}, if $ T \in \Bdd{\L} $ commutes with both $ Q $ and $ P_{\hbar} $, then $ T \in \C \cdot \Id{\L} $. Hence, Condition (2) is satisfied.

Next, observe that $ \S $ is a dense $ \{ Q,P_{\hbar} \} $-invariant domain that contains the Hermite functions, whose linear span is a dense set of $ \{ Q,P_{\hbar} \} $-analytic vectors (\cite{Reed|Simon}, Section X.6, Example 2). Hence, under the unitary equivalence, $ \S $ transfers to a dense $ \{ A,B \} $-invariant domain $ D $ on which $ (A,B) $ obeys the HCR (because the Schr\"odinger pair obeys the HCR on $ \S $), and likewise, the linear span of the Hermite functions transfers to a dense subset of $ D $ consisting of $ \{ A,B \} $-analytic vectors.

The proof of the proposition is now complete.
\end{proof}


\section{Concluding Remarks}

It may be claimed that the infinitesimal version of the Stone-von Neumann Theorem that we have put forth suffers from the drawback that its proof still relies on the WCR and the Stone-von Neumann Theorem. This, however, appears unavoidable. After all, our goal was to formulate, not prove, the uniqueness property of the HCR solely in terms of unbounded operators. What we have done is merely to separate the physical and mathematical content of the HCR.

The proposition can be generalized to higher dimensions. Our proof easily goes through in these cases, for we only have to replace $ \h $ and $ \H $ by their higher-dimensional analogs and also use the higher-dimensional version of the Stone-von Neumann Theorem.


\begin{bibdiv}
\begin{biblist}

\bib*{Physics}{book}{
conference = {
             title   = {Proc. Internat. Sympos.},
             address = {Warsaw},
             date    = {1974},
             },
title      = {Mathematical Physics and Physical Mathematics},
editor     = {K. Maurin},
editor     = {R. R\c aczka},
date       = {1976},
publisher  = {Reidel},
address    = {Hingham, MA}
}

\bib*{Mechanics}{book}{
title     = {Mechanics: From Theory to Computation. Essays in Honor of Juan-Carlos Simo},
date      = {2000},
publisher = {Springer},
address   = {New York, NY}
}

\bib{Dixmier}{article}{
author   = {Dixmier, J.},
title    = {Sur la relation $ i (P Q - Q P) = 1 $},
journal  = {Compos. Math.},
volume   = {13},
date     = {1958},
pages    = {263--269},
review   = {\MR{0101478}}
}

\bib{Flato}{collection.article}{
author     = {Flato, M.},
title      = {Theory of Analytic Vectors and Applications},
xref       = {Physics},
date       = {1976},
pages      = {231--250},
review     = {\MR{0550918}}
}

\bib{Flato|Simon}{article}{
author   = {Flato, M.},
author   = {Simon, J.},
title    = {Separate and Joint Analyticity for Lie Groups Representations},
journal  = {J. Funct. Anal.},
volume   = {13},
date     = {1972},
pages    = {268--276},
review   = {\doi{10.1016/0022-1236(73)90035-9}}
}

\bib{Gotay}{collection.article}{
author    = {Gotay, M.},
title     = {Obstructions to Quantization},
xref      = {Mechanics},
date      = {2000},
pages     = {171--216},
review    = {\MR{1766362}}
}

\bib{Nelson}{article}{
author   = {Nelson, E.},
title    = {Analytic Vectors},
journal  = {Ann. of Math. (2)},
volume   = {70},
number   = {3},
date     = {1959},
pages    = {572--615},
review   = {\MR{107176}}
}

\bib{Von Neumann}{article}{
author   = {von Neumann, J.},
title    = {Die Eindeutigkeit der Schr\"odingerschen Operatoren},
journal  = {Math. Ann.},
volume   = {104},
date     = {1931},
pages    = {570--578},
review   = {\MR{1512685}}
}

\bib{Raeburn|Williams}{book}{
author    = {Raeburn, I.},
author    = {Williams, D.},
title     = {Morita Equivalence and Continuous-Trace $ C^{*} $-Algebras},
date      = {1991},
series    = {Math. Surveys Monogr.},
volume    = {60},
publisher = {Amer. Math. Soc.},
address   = {Providence, RI},
review    = {\MR{2288954}}
}

\bib{Reed|Simon}{book}{
author    = {Reed, M.},
author    = {Simon, B.},
title     = {Methods of Modern Mathematical Physics. II: Fourier Analysis, Self-Adjointness},
date      = {1975},
publisher = {Academic Press},
address   = {San Diego, CA},
review    = {\MR{0493420}}
}

\bib{Rellich}{article}{
author   = {Rellich, F.},
title    = {Der Eindeutigkeitssatz f\"ur die Los\"ungen der quantenmechanischen Vertauschungsrelationen},
journal  = {Nachr. Akad. Wiss. G\"ottingen Math.-Phys. Kl. II},
date     = {1946},
pages    = {107--115},
review   = {\MR{0022310}}
}

\bib{Riesz|Sz.-Nagy}{book}{
author    = {Riesz, F.},
author    = {Sz.-Nagy, B.}
title     = {Functional Analysis},
date      = {1990},
publisher = {Dover Publications},
address   = {Mineola, NY},
review    = {\MR{1068530}}
}

\bib{Schmudgen}{article}{
author   = {Schm\"udgen, K.},
title    = {On the Heisenberg Commutation Relation. I},
journal  = {J. Funct. Anal.},
volume   = {5},
date     = {1983},
pages    = {8--49},
review   = {\MR{0689997}}
}

\bib{Stone}{article}{
author   = {Stone, M.},
title    = {Linear Transformations in Hilbert Space. III. Operational Methods and Group Theory},
journal  = {Proc. Natl. Acad. Sci. USA},
volume   = {16},
date     = {1930},
pages    = {172--175},
review   = {\doi{10.1073/pnas.16.2.172}}
}

\bib{Takhtajan}{book}{
author    = {Takhtajan, L.},
title     = {Quantum Mechanics for Mathematicians},
date      = {2008},
series    = {Grad. Stud. Math.},
volume    = {95},
publisher = {Amer. Math. Soc.},
address   = {Providence, RI},
review    = {\MR{2433906}}
}

\bib{Taylor}{book}{
author    = {Taylor, M.},
title     = {Noncommutative Harmonic Analysis},
date      = {1986},
series    = {Math. Surveys Monogr.},
volume    = {22},
publisher = {Amer. Math. Soc.},
address   = {Providence, RI},
review    = {\MR{0852988}}
}

\end{biblist}
\end{bibdiv}


\end{document}